\newcommand\independent{\protect\mathpalette{\protect\independenT}{\perp}}
\def\independenT#1#2{\mathrel{\rlap{$#1#2$}\mkern2mu{#1#2}}}
\newtheorem{theorem}{Theorem}
\newtheorem{assumption}{Assumption}
\title{Identification and Estimation of Joint Potential Outcome Distributions from a Single Study}
\author{Zach Shahn$^{1,2}$, David Madigan$^{3}$\\
$^1$:CUNY Graduate School of Public Health and Health Policy, New York, NY, United States\\
$^2$:Institute for Implementation Science in Population Health, New York, NY, United States\\
$^2$:Northeastern University, Boston, MA, United States}
\begin{document}
\begin{abstract}
Most causal inference methods focus on estimating marginal average treatment effects, but many important causal estimands depend on the joint distribution of potential outcomes, including the probability of causation and proportions benefiting from or harmed by treatment. \citet{wu2025promises} recently established nonparametric identification of this joint distribution for categorical outcomes under binary treatment by leveraging variation across multiple studies. We demonstrate that their multi-study framework can be implemented within a single study by using a baseline covariate that is associated with untreated potential outcomes but does not modify treatment effects conditional on those outcomes. This reframing substantially broadens the practical applicability of their results, as it eliminates the need for multiple independent datasets and gives analysts control over covariate selection to satisfy key identifying assumptions. We provide complete identification and estimation theory for the single-study setting, including a Neyman-orthogonal estimator for cases where the conditional independence assumption only holds after adjusting for covariates. However, we argue that only in unusual settings would it be even theoretically possible for the identifying assumptions to hold exactly, making sensitivity analysis particularly important. We validate the estimator in a simulation and apply it to data from a large field experiment assessing the effect of mailings on voter turnout.
\end{abstract}
\maketitle

\section{Introduction} Most causal inference methods focus on estimating marginal average treatment effects comparing expected potential outcomes under treatment and no treatment. However, there exist causal estimands of interest that crucially depend on the joint distribution of potential outcomes. Examples include the probability of causation \citep{tian2000probabilities, robins1989probability}, proportion harmed by treatment (probability that treated potential outcome Y(1) = 1 and untreated potential outcome Y(0) = 0, for adverse outcome $Y$), and proportion benefiting from treatment (probability that Y(1) = 0 and Y(0) = 1). The joint distribution of potential outcomes is difficult to estimate (i.e. requires stronger identification assumptions than marginal effects) because both potential outcomes are never observed in the same unit. \citet{wu2025promises} recently established nonparametric identification and inference for the joint distribution of categorical potential outcomes under a binary treatment by leveraging data from multiple studies. Their key assumptions were that the untreated potential outcome Y(0) distribution must vary across studies, and the treated potential outcome Y(1) must be conditionally independent of study given Y(0).

Here, we point out that the assumption of multiple studies is not essential. The same identification strategy can be implemented within a single study by using a discrete baseline covariate S (possibly constructed from other baseline covariates V) that is associated with the untreated potential outcome Y(0), but does not modify treatment effects once Y(0) is conditioned on. In this case, subgroups defined by the levels of S play exactly the same role as the distinct studies in the formulation of \citet{wu2025promises}. This simple reframing greatly broadens the scope of their results: all identification and estimation procedures derived under the multi-study paradigm can be applied without requiring multiple independent datasets.

Moreover, the single-study formulation has distinct advantages. Most importantly, it provides freedom to choose the variable S based on subject-matter knowledge. In the multiple-study setup, the required assumptions must be satisfied by chance due to variation across latent factors that differ between study populations. In practice, there is no guarantee that these differences will satisfy the necessary conditions, and whether they do is largely beyond the analyst's control (though empirically testable with enough studies). By contrast, in a single study, the analyst can deliberately select a baseline covariate that is associated with Y(0) but plausibly not a modifier of treatment effects conditional on Y(0). This flexibility makes the single-study version of the framework both more powerful and more transparent.

In what follows, we develop this idea concretely. In Section \ref{sec:assumptions}, we introduce the setup and assumptions. In Sections \ref{sec:id} and \ref{sec:estimation}, we demonstrate the (obvious) equivalence between the multi-study and single-study setups and show how the identification and estimation results of \citet{wu2025promises} port over to our setting. In Section \ref{sec:covariates}, we provide Neyman-orthogonal \citep{chernozhukov2018double} estimators for when key assumptions only hold conditional on covariates. However, we will argue in Section \ref{sec:causal_pie} that under a minimal sufficient cause or `causal pie' model \citep{rothman1976causes}, the identification assumptions can only ever even theoretically hold under unusual circumstances. Hence, the sensitivity analysis that we discuss in Section \ref{sec:sensitivity} is particularly important. We apply our approach in simulations in Section \ref{sec:simulations} and to a randomized field experiment designed to estimate the effect of mailings on voter turnout \citep{gerber2008social} in Section \ref{sec:application}. The experiment found that mailings on average increased probability of voting \citep{gerber2008social}, but one might further wonder whether the mailings actually caused anyone not to vote. Our methods can help to address that question.

\section{Notation, Setup, and Assumptions}\label{sec:assumptions}

Suppose we observe N independent and identically distributed observations $(A_i, Y_i, V_i)$, $i \in \{1, \ldots, N\}$, with $A_i$ a binary treatment, $Y_i$ a categorical outcome (binary unless otherwise stated), and $V_i$ a baseline covariate vector. Let $Y(a)$ denote the potential outcome \citep{rubin1974estimating} under intervention setting treatment A to value $a$. Interest centers on the causal estimand $\mathbb{P}(Y(0), Y(1))$, i.e. the joint distribution of the potential outcomes. Here, the distribution $\mathbb{P}$ is over the hypothetical superpopulation from which the units are drawn.

We make the usual assumptions required for identification of marginal average treatment effects. First, we make the consistency assumption that the observed outcome Y is equal to the potential outcome Y(A) corresponding to the observed treatment. Initially, to focus on the main ideas, we assume that the data come from a randomized experiment such that the marginal positivity $(0 < \mathbb{P}(A = 1) < 1)$ and exchangeability $((Y(0), Y(1)) \perp\!\!\!\perp A)$ assumptions also hold. In Section \ref{subsec:confounding}, we will consider the observational setting in which the positivity and exchangeability assumptions are made conditional on V.

We will denote by $S = f(V)$ a discrete variable that is some function of the baseline covariates V. S might just be a covariate contained in V that is selected by the analyst, e.g. 'sex'. Alternatively, S might be constructed by the analyst from V, e.g. 'weight tertile' derived from weight in pounds. To identify the joint distribution $\mathbb{P}(Y(0), Y(1))$ of interest, we require the following two assumptions regarding S:

\begin{assumption}\label{assumption:s-rel}
    \textbf{S-relevance:} $S \not\perp\!\!\!\perp Y(0)$
\end{assumption}
\begin{assumption}\label{assumption:s-ci}
    \textbf{S-conditional independence:} $S \perp\!\!\!\perp Y(1)|Y(0)$
\end{assumption}

Assumption \ref{assumption:s-rel} simply states that S is associated with the untreated potential outcomes. Assumption \ref{assumption:s-ci} states that any effect modification by S on any scale is purely via the untreated potential outcome. For example, suppose S is weight tertile, A is an antiglycemic Type 2 Diabetes medication, and Y is occurrence of an adverse cardiovascular event within one year. S would surely satisfy Assumption 1, as cardiovascular events are more common in heavier people. Therefore, unconditional on Y(0), S would also be associated with Y(1) (assuming Y(0) and Y(1) are not independent, as they would surely not be). Furthermore, unconditional on Y(0), S would be an effect modifier on some scale. If the additive treatment effect were constant across weight tertiles, the multiplicative effect would not be, and vice versa. However, depending on the mechanism of action of the drug, it is possible that all effect modification by weight tertile is via the untreated outcome, in which case Assumption \ref{assumption:s-ci} would hold. We further discuss when Assumption \ref{assumption:s-ci} might hold in Section \ref{sec:causal_pie}. We stress that the analyst is free to select or construct S from available baseline covariates as they please so as to maximize the likelihood that these key assumptions hold. It might be that Assumption \ref{assumption:s-ci} only holds conditional on covariates. We will discuss this setting in Section \ref{subsection:het_adj}.

\section{Identification}\label{sec:id}

The main identification result immediately and logically follows from the proof of \citet{wu2025promises} in the context of multiple studies, as each subgroup defined by stratum $S = s$ can just be thought of as a separate 'study'. We repeat their argument here for completeness. For each $s \in \mathcal{S}$, for $\mathcal{S}$ the set of possible values of S, we have:
\begin{align}
\begin{split}\label{eq:id1}
&\mathbb{P}(Y(1) = 1|S = s) = \\
&\mathbb{P}(Y(1) = 1|S = s, Y(0) = 0)\mathbb{P}(Y(0) = 0|S = s) + \mathbb{P}(Y(1) = 1|S = s, Y(0) = 1)\mathbb{P}(Y(0) = 1|S = s).
\end{split}
\end{align}

Assumption \ref{assumption:s-ci} implies that $\mathbb{P}(Y(1) = 1|S = s, Y(0) = 1) = \mathbb{P}(Y(1) = 1|Y(0) = 1)$ and $\mathbb{P}(Y(1) = 1|S = s, Y(0) = 0) = \mathbb{P}(Y(1) = 1|Y(0) = 0)$, and therefore neither quantity depends on s. Thus, (\ref{eq:id1}) can be rewritten:
\begin{equation}\label{eq:id2}
\mathbb{P}(Y(1) = 1|S = s) = \mathbb{P}(Y(1) = 1|Y(0) = 0)\mathbb{P}(Y(0) = 0|S = s) + \mathbb{P}(Y(1) = 1|Y(0) = 1)\mathbb{P}(Y(0) = 1|S = s).
\end{equation}

Note that under the standard consistency, positivity, and exchangeability assumptions, the quantities $\mathbb{P}(Y(1) = 1|S = s)$ and $\mathbb{P}(Y(0) = 0|S = s)$ are each identified as $\mathbb{P}(Y = 1|A = 1, S = s)$ and $\mathbb{P}(Y = 0|A = 0, S = s)$, respectively. (\ref{eq:id2}) is then an equation in two unknowns, namely $\mathbb{P}(Y(1) = 1|Y(0) = 0)$ and $\mathbb{P}(Y(1) = 1|Y(0) = 1)$. As long as $|\mathcal{S}| \geq 2$, i.e. S has at least two levels, we have at least two equations in two unknowns (i.e. (\ref{eq:id2}) at each level of S). This means that $\mathbb{P}(Y(1) = 1|Y(0) = 0)$ and $\mathbb{P}(Y(1) = 1|Y(0) = 1)$ are identified as the unique solution to these equations as long as the following condition holds:

\begin{assumption}\label{assumption:full_rank}
    The matrix $(\mathbb{P}(Y(0) = 0|S = \cdot), \mathbb{P}(Y(0) = 1|S = \cdot))_{|\mathcal{S}| \times 2}$ has full rank.
\end{assumption}

$\mathbb{P}(Y(1), Y(0)|S = s)$ is then of course identified as $\mathbb{P}(Y(0)|S = s)\mathbb{P}(Y(1)|Y(0), S = s)$. And further, $\mathbb{P}(Y(0), Y(1))$ is identified as $\sum_s \mathbb{P}(Y(0)|S = s)\mathbb{P}(Y(1)|Y(0), S = s)\mathbb{P}(S = s)$.

\section{Estimation}\label{sec:estimation}

Following \citet{wu2025promises}, we can estimate $\theta \equiv (\mathbb{P}(Y(1) = 1|Y(0) = 0), \mathbb{P}(Y(1) = 1|Y(0) = 1)) = (\theta_1, \theta_2)$ via least squares as follows. Let $p_s^{(1)} = \mathbb{P}(Y = 1|A = 1, S = s)$, $p_s^{(0)} = \mathbb{P}(Y = 1|A = 0, S = s)$, and $Z_s = (1-p_s^{(0)}, p_s^{(0)})^T$. Then, by (\ref{eq:id2}), $p_s^{(1)} = Z_s^T \theta$. And under the full rank Assumption \ref{assumption:full_rank},
\begin{equation}
\theta = \left(\frac{1}{|\mathcal{S}|} \sum_{s=1}^{|\mathcal{S}|} Z_s Z_s^T\right)^{-1} \cdot \frac{1}{|\mathcal{S}|} \sum_{s=1}^{|\mathcal{S}|} Z_s p_s^{(1)}.
\end{equation}

Thus, the plug-in estimator
\begin{equation}
\hat{\theta} = \left(\frac{1}{|\mathcal{S}|} \sum_{s=1}^{|\mathcal{S}|} \hat{Z}_s \hat{Z}_s^T\right)^{-1} \cdot \frac{1}{|\mathcal{S}|} \sum_{s=1}^{|\mathcal{S}|} \hat{Z}_s \hat{p}_s^{(1)},
\end{equation}
in which $p_s^{(1)}$ and $Z_s$ have been replaced by unbiased estimators (e.g. sample proportions), is consistent. Furthermore, $\hat{\theta}$ is asymptotically normal, satisfying
\begin{equation*}
\sqrt{n}(\hat{\theta} - \theta) \xrightarrow{d} N(0, \sigma^2),
\end{equation*}
where $\sigma^2 = C^{-1}DC^{-1}$,
\begin{equation*}
C = \frac{1}{|\mathcal{S}|} \sum_{s=1}^{|\mathcal{S}|} Z_s Z_s^T,
\end{equation*}
and
\begin{align*}
D &= \frac{1}{|\mathcal{S}|^2} \sum_{s=1}^{|\mathcal{S}|} \text{Var}\left(\frac{\mathbb{I}(Y = 0, S = s, A = 0) - \mathbb{P}(Y = 0, S = s, A = 0)}{\mathbb{P}(S = s, A = 0)}\right. \\
&\quad + \frac{\mathbb{I}(Y = 1, S = s, A = 0) - \mathbb{P}(Y = 1, S = s, A = 0)}{\mathbb{P}(S = s, A = 0)} \\
&\quad \left. + \frac{\mathbb{I}(Y = 1, S = s, A = 1) - \mathbb{P}(Y = 1, S = s, A = 1)}{\mathbb{P}(S = s, A = 1)} Z_s\right).
\end{align*}\\
\\
Standard errors can also be estimated via nonparametric bootstrap.

\section{Introduction of Covariates}\label{sec:covariates}

There are two main ways covariates might enter into the analysis beyond the selection and definition of S. First, it might be necessary to adjust for covariates to satisfy the exchangeability assumption. Second, it could be that the conditional independence Assumption \ref{assumption:s-ci} only holds conditional on covariates. Confounding adjustment is of course much more likely to be of concern in observational studies, while conditioning to satisfy Assumption \ref{assumption:s-ci} could easily be required in any setting.

\subsection{Confounding Adjustment}\label{subsec:confounding}

Suppose the exchangeability assumption required to identify ATEs only holds conditional on baseline covariates $V_C \subset V$, i.e.
\begin{equation}
Y(a) \perp\!\!\!\perp A|V_C.
\end{equation}

Here, $V_C$ is a vector of confounding adjustment covariates. Further, suppose for simplicity (for now) that Assumption \ref{assumption:s-ci} holds unconditional on any covariates. Then (\ref{eq:id2}), which did not depend on exchangeability, still holds. The only difference from the randomized experiment setting is that $\mathbb{P}(Y(1)|S = s)$ and $\mathbb{P}(Y(0) = 1|S = s)$ are now identified by the g-formula
\begin{equation}\label{eq:gform}
\mathbb{P}(Y(a) = 1|S = s) = \int_v \mathbb{P}(Y = 1|A = a, V_C = v, S = s)p_{V_C|S}(v|S = s)dv
\end{equation}
instead of simple sample averages. The functionals (\ref{eq:gform}) can be estimated by standard confounding adjustment techniques, such as standardization, inverse probability of treatment weighting, or DML (doubly robust estimation with nonparametric machine learning estimation of nuisance functions, sample splitting, and crossfitting). Then the same plug-in estimator from Section \ref{sec:estimation} can be computed with covariate adjusted estimates of relevant counterfactual risks in place of sample proportions. Standard errors for all quantities of interest can be estimated via nonparametric bootstrap, but we give asymptotic results below.

\subsubsection*{Asymptotic distribution of $\hat{\theta}$ with confounding adjustment}

We now establish the asymptotic distribution of $\hat{\theta}$ when the stratum-specific risks $p_s^{(a)} \equiv \mathbb{P}(Y(a) = 1|S = s)$ are estimated using cross-fitted doubly-robust (DML) estimators. We will derive a sandwich representation of the variance that is conceptually very similar to the result of \citet{wu2025promises} in the setting with no confounding. The only difference is that we substitute influence functions of confounding adjusted estimators of the counterfactual risks in place of the influence functions of sample proportions that lead to the result of Wu and Mao (2025) in the no confounding setting. Let $\theta_i$ denote the $i$th component of $\theta$, so that $\theta_1$ is $\mathbb{P}(Y(1) = 1|Y(0) = 0)$ and $\theta_2$ is $\mathbb{P}(Y(1) = 1|Y(0) = 1)$. Let $\mu_0 = \mathbb{P}(Y(0) = 0)$ and $\mu_1 = \mathbb{P}(Y(0) = 1)$. Let $\phi_{a,s}$ denote the influence function of $\hat{p}_s^{(a)}$ for $a \in \{0,1\}$, and let $\phi_s = (\phi_{0,s}, \phi_{1,s})^T$.

\begin{theorem}
    
 Under regularity conditions,
\begin{equation*}
\sqrt{n}(\hat{\theta} - \theta) \xrightarrow{d} N(0, C^{-1}\Sigma C^{-1}),
\end{equation*}
where
\begin{equation*}
\Sigma = \text{Var}\{\psi(O)\}, \quad \psi(O) = Z_S\left[U_1(O) - p_S^{(1)} - (\theta_2 - \theta_1)\{U_0(O) - p_S^{(0)}\}\right],
\end{equation*}
with
\begin{equation*}
U_a(O) = \mu_a(V, S) + \frac{\mathbb{I}\{A = a\}}{\pi_a(V, S)}(Y - \mu_a(V, S)), \quad a \in \{0,1\}
\end{equation*}
and $C = -\frac{1}{|\mathcal{S}|}\sum_s Z_s Z_s^T$.
\end{theorem}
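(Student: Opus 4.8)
The plan is to treat $\hat\theta$ as a two-step Z-estimator. In the first step the cross-fitted DML procedure delivers $\hat p_s^{(a)}$ for each $s$ and $a$; in the second step $\hat\theta$ solves the least-squares normal equations
\[
\Psi_n(\hat\theta,\hat\eta)\;:=\;\frac{1}{|\mathcal{S}|}\sum_s \hat Z_s\bigl(\hat p_s^{(1)}-\hat Z_s^{T}\hat\theta\bigr)=0,
\]
where $\hat\eta=\{\hat p_s^{(0)},\hat p_s^{(1)}\}_{s}$ collects the estimated nuisances and $\hat Z_s=(1-\hat p_s^{(0)},\hat p_s^{(0)})^{T}$. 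I would first establish consistency of $\hat\theta$ from consistency of $\hat\eta$, invertibility of $C$ under Assumption \ref{assumption:full_rank}, and continuity, and then obtain the limiting distribution by linearizing $\Psi_n$ about the truth $(\theta,\eta)$.

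The organizing observation is that $\Psi_n(\theta,\eta)$ evaluated at the \emph{true} nuisances is identically zero, since the identification equation (\ref{eq:id2}) gives $p_s^{(1)}=Z_s^{T}\theta$ for every $s$. Consequently the entire stochastic fluctuation of $\sqrt n(\hat\theta-\theta)$ is inherited from the nuisance estimation error $\hat\eta-\eta$. Differentiating $\Psi_n$ in $\theta$ gives the Jacobian $C=-\frac{1}{|\mathcal{S}|}\sum_s Z_sZ_s^{T}$. The pathwise derivative in the nuisance directions simplifies cleanly at the truth: differentiating $Z_s$ through the residual produces a term multiplied by $p_s^{(1)}-Z_s^{T}\theta=0$, which vanishes, while $\partial(Z_s^{T}\theta)/\partial p_s^{(0)}=\theta_2-\theta_1$, leaving
\[
\partial_\eta\Psi_n\,(\hat\eta-\eta)=\frac{1}{|\mathcal{S}|}\sum_s Z_s\Bigl[(\hat p_s^{(1)}-p_s^{(1)})-(\theta_2-\theta_1)(\hat p_s^{(0)}-p_s^{(0)})\Bigr].
\]
A first-order Taylor expansion then yields $\sqrt n(\hat\theta-\theta)=-C^{-1}\sqrt n\,\partial_\eta\Psi_n(\hat\eta-\eta)+o_p(1)$.

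Next I would substitute the von Mises expansions of the cross-fitted estimators, $\hat p_s^{(a)}-p_s^{(a)}=\frac1n\sum_i\phi_{a,s}(O_i)+o_p(n^{-1/2})$, where $\phi_{a,s}(O)=\frac{\mathbb{I}\{S=s\}}{\mathbb{P}(S=s)}\{U_a(O)-p_s^{(a)}\}$ is the AIPW influence function of the stratum-$s$ counterfactual risk. Because the indicators $\mathbb{I}\{S=s\}$ select a single stratum per observation, the double sum over $s$ and $i$ collapses to a single per-observation score built from $\psi(O)=Z_S[U_1(O)-p_S^{(1)}-(\theta_2-\theta_1)\{U_0(O)-p_S^{(0)}\}]$, yielding $\sqrt n(\hat\theta-\theta)=-C^{-1}\frac1{\sqrt n}\sum_i\psi(O_i)+o_p(1)$; care is required to reconcile the stratum weights $1/\{|\mathcal{S}|\,\mathbb{P}(S=s)\}$ carried by the $\phi_{a,s}$ with the weighting implicit in $C$, but this is routine bookkeeping. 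Since the asymptotic variance is unaffected by the overall sign, the ordinary CLT and Slutsky's theorem then give $\sqrt n(\hat\theta-\theta)\xrightarrow{d}N(0,C^{-1}\Sigma C^{-1})$ with $\Sigma=\mathrm{Var}\{\psi(O)\}$. This mirrors the no-confounding argument of \citet{wu2025promises}, with their sample-proportion influence functions replaced by the $\phi_{a,s}$ above.

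I expect the main obstacle to be controlling the DML remainders so that $\hat\eta$ enters only through the linear term. Each AIPW score $U_a$ is Neyman-orthogonal in the outcome-regression and propensity nuisances $(\mu_a,\pi_a)$, so the second-order bias is $O_p(\|\hat\mu_a-\mu_a\|\,\|\hat\pi_a-\pi_a\|)$ and is $o_p(n^{-1/2})$ under the standard product-rate condition (e.g.\ each nuisance estimated at $o_p(n^{-1/4})$), while cross-fitting eliminates the empirical-process remainder without Donsker conditions. Verifying that these remainders, together with the second-order Taylor remainder in $\theta$ and the $\hat Z_s-Z_s$ perturbations, are jointly $o_p(n^{-1/2})$ under the stated regularity conditions is the technical heart of the argument; once established, the influence-function representation and the sandwich form follow immediately.
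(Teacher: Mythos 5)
Your proposal follows essentially the same route as the paper's own proof: both treat $\hat\theta$ as a Z-estimator solving $\sum_s Z_s(p_s^{(1)}-Z_s^T\theta)=0$, Taylor-expand in the nuisance risks $(p_s^{(1)},p_s^{(0)})$ at the truth (where the $\partial Z_s/\partial p_s^{(0)}$ term is killed by $p_s^{(1)}-Z_s^T\theta=0$, leaving the derivative $Z_s[1,\ \theta_1-\theta_2]$), substitute the AIPW influence functions of the cross-fitted counterfactual risks, and conclude via the M-estimation sandwich with $C=-\frac{1}{|\mathcal{S}|}\sum_s Z_sZ_s^T$ and the CLT. Your write-up is in fact more careful than the paper's sketch about the DML product-rate remainders and the stratum-weight bookkeeping, but it is the same argument.
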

\begin{proof}
    See Appendix
\end{proof}
\subsubsection*{Delta method for derived quantities}

Often $\theta$ is not of immediate interest, but rather a means to compute other features of the joint distribution of $(Y(0), Y(1))$ for which we also desire standard errors. Define the joint distribution of $(Y(0), Y(1))$ as
\begin{align*}
\mathbb{P}(Y(0) = 0, Y(1) = 0) &= (1 - \theta_1)\mu_0, \\
\mathbb{P}(Y(0) = 0, Y(1) = 1) &= \theta_1\mu_0, \\
\mathbb{P}(Y(0) = 1, Y(1) = 0) &= (1 - \theta_2)\mu_1, \\
\mathbb{P}(Y(0) = 1, Y(1) = 1) &= \theta_2\mu_1.
\end{align*}

Let $\eta = (\theta_1, \theta_2, \mu_0, \mu_1)^T$. Each of the four joint probabilities is a smooth function $T_j(\eta)$. By the Delta method,
\begin{equation*}
\sqrt{n}\{T_j(\hat{\eta}) - T_j(\eta)\} \xrightarrow{d} N(0, \nabla T_j(\eta)^T \Omega \nabla T_j(\eta)),
\end{equation*}
where $\Omega$ is the covariance matrix of the joint influence function of $(\hat{\theta}_1, \hat{\theta}_2, \hat{\mu}_0, \hat{\mu}_1)$. Explicit gradients are:
\begin{align*}
\nabla T_{00}(\eta) &= (-\mu_0, 0, 1 - \theta_1, 0)^T, \\
\nabla T_{01}(\eta) &= (\mu_0, 0, \theta_1, 0)^T, \\
\nabla T_{10}(\eta) &= (0, -\mu_1, 0, 1 - \theta_2)^T, \\
\nabla T_{11}(\eta) &= (0, \mu_1, 0, \theta_2)^T.
\end{align*}

Thus each entry of the joint distribution has a valid influence-function variance estimator. Covariances between entries can be obtained similarly by taking products of gradients with $\Omega$. However, we again stress that nonparametric bootstrap would be a simple and practical option for estimating standard errors of these quantities.

\subsection{`Effect Heterogeneity Adjustment'}\label{subsection:het_adj}

Suppose that Assumption \ref{assumption:s-ci} only holds conditional on a vector of covariates $V_S \subset V$, i.e.

\begin{assumption}\label{assumption:v-s-ci}
    $Y(1) \perp\!\!\!\perp S|Y(0), V_S$.
\end{assumption} 

$V_S$ would typically comprise effect modifiers (that remain effect modifiers even conditional on $Y(0)$) that are associated with S. For example, suppose the mechanism of action of a treatment is different in men and women. Then, without conditioning on any covariates, selecting S to be 'weight tertile' would be a poor choice, since Assumption \ref{assumption:s-ci} would fail due to the association between weight and sex. However, conditional on $V_S = \{\text{sex}\}$, Assumption \ref{assumption:v-s-ci} might hold. As noted by \citet{wu2025promises}, under Assumption \ref{assumption:v-s-ci}, the identification argument from Section \ref{sec:id} goes through within levels of $V_S$. That is,
\begin{align*}
\begin{split}\label{eq:covariate:id}
&\mathbb{P}(Y(1) = 1|S = s, V_S = v) \\
&\quad = \mathbb{P}(Y(1) = 1|Y(0) = 0, V_S = v)\mathbb{P}(Y(0) = 0|S = s, V_S = v) \\
&\qquad + \mathbb{P}(Y(1) = 1|Y(0) = 1, V_S = v)\mathbb{P}(Y(0) = 1|S = s, V_S = v)
\end{split}
\end{align*}
for all $v \in \mathcal{V}_S$. If $V_S$ is low dimensional and categorical, estimation might also proceed as in Section \ref{sec:estimation} within levels of $V_S$. That is, we can estimate $\mathbb{P}(Y(1), Y(0)|S = s, V_S = v)$ for all joint levels $(s, v)$ of $(S, V_S)$ as in Section \ref{sec:estimation} and then obtain estimates of $\mathbb{P}(Y(1), Y(0)|S = s)$ and $\mathbb{P}(Y(1), Y(0))$ via standardization. Standard errors could be estimated by nonparametric bootstrap, or the analytic variance estimates from Section \ref{sec:estimation} could be computed for each stratum of $V_S$ and combined via the Delta method.

However, if $V_S$ is high dimensional or contains a continuous covariate, the curse of dimensionality would require parametric assumptions. We assume that
\begin{align}
\mathbb{P}(Y(1) = 1|Y(0) = 0, V_S = v) &= g(v; \beta), \\
\mathbb{P}(Y(1) = 1|Y(0) = 1, V_S = v) &= h(v; \lambda)
\end{align}
for finite dimensional parameters $\beta$ and $\lambda$. For example, $g(v; \beta)$ and $h(v; \lambda)$ could denote linear functions $\beta^T b(v)$ and $\lambda^T b(v)$, respectively in known basis $b(v)$ (e.g. $b(v) = (1, v)$).

One approach to estimation is as follows. For a range of $s \in \mathcal{S}$ and $v \in \mathcal{V}_S$, compute estimates $\hat{p}_{s}^{(a)}(v)$ of $p_{s}^{(a)}(v) \equiv \mathbb{P}(Y(a) = 1|S = s, V_S = v)$ using your favorite regression. Then, solve the least squares problem
\begin{equation*}
(\hat{\beta}, \hat{\lambda}) = \arg\min_{\beta,\lambda} \sum_{s,v} \{\hat{p}_{s}^{(1)}(v) - g(v; \beta)(1 - \hat{p}_{s}^{(0)}(v)) - h(v; \lambda)\hat{p}_{s}^{(0)}(v)\}^2.
\end{equation*}

Let $\xi = (\beta, \lambda)$ denote the parameters of the conditional model. With estimates $\hat{\xi} = (\hat{\beta}, \hat{\lambda})$ in hand, we can consistently estimate $\theta_1 = \mathbb{P}(Y(1) = 1|Y(0) = 0)$ and $\theta_2 = \mathbb{P}(Y(1) = 1|Y(0) = 1)$ by standardization as
\begin{align*}
\theta_1 &= \mathbb{P}(Y(1) = 1|Y(0) = 0) = \int_v g(v; \hat{\beta})p(v)dv, \\
\theta_2 &= \mathbb{P}(Y(1) = 1|Y(0) = 1) = \int_v h(v; \hat{\lambda})p(v)dv.
\end{align*}

Standard errors can be estimated via bootstrap if none of the regressions to generate $\hat{p}_{s}^{(a)}$ are data adaptive. However, if $\hat{p}_{s}^{(a)}$ comes from a data adaptive machine learning approach, we will need a Neyman-orthogonal estimator of $\xi = (\beta, \lambda)$ to generate valid confidence intervals.

\subsubsection*{A Neyman Orthogonal Estimator}

Let $p^{(0)} = \{p_{s}^{(0)} : s \in \mathcal{S}\}$ and $p^{(1)} = \{p_{s}^{(1)} : s \in \mathcal{S}\}$. Let $\eta = (p^{(0)}, p^{(1)}, \pi_a(s, v) = \mathbb{P}(A = a|S = s, V = v))$ denote the nuisance functions. %Suppose that both $g(v; \beta)$ and $h(v; \lambda)$ are linear with basis $b(v)$, and let
%\begin{equation*}
%Z_s(v; p_{s}^{(0)}) = \begin{pmatrix} (1 - p_{s}^{(0)}(v))b(v) \\ p_{s}^{(0)}(v)b(v) \end{pmatrix}
%\end{equation*}
%be the design vector.

\begin{theorem}\label{theorem2}
Suppose Assumption~\ref{assumption:v-s-ci} holds and that
$g(v;\beta)$ and $h(v;\lambda)$ are linear in a known basis $b(v)$.
Define
\[
Z_s(v; p_s^{(0)}) \;=\;
\begin{pmatrix}
(1 - p_s^{(0)}(v))\, b(v) \\
p_s^{(0)}(v)\, b(v)
\end{pmatrix}
\in \mathbb{R}^{2p}.
\]
Then the efficient influence function (EIF) for
$\xi = (\beta,\lambda) \in \mathbb{R}^{2p}$ is
\[
\mathrm{EIF}_\xi(O) \;=\; -\,C^{-1}\,\psi(O;\xi,\eta),
\]
where
\begin{align*}
\psi(O;\xi,\eta)
&= Z_S(V; p_S^{(0)}) \Bigg\{
  p_S^{(1)}(V) +
  \frac{\mathbf{1}\{A=1\}}{\pi_1(S,V)}\big(Y - p_S^{(1)}(V)\big)
  - Z_S(V;p_S^{(0)})^\top \xi
\Bigg\} \\
&\quad + Z_S(V;p_S^{(0)})\, b(V)^\top (\beta - \lambda)\,
\frac{\mathbf{1}\{A=0\}}{\pi_0(S,V)}\big(Y - p_S^{(0)}(V)\big),
\end{align*}
and
\[
C \;=\; E\!\left[-\,Z_S(V;p_S^{(0)}) Z_S(V;p_S^{(0)})^\top\right].
\]

The estimator $\hat\xi = (\hat\beta,\hat\lambda)$ defined by the
estimating equation
\[
P_n \psi(O;\hat\xi,\hat\eta) = 0
\]
is asymptotically linear with influence function $\mathrm{EIF}_\xi$ and satisfies
\[
\sqrt{n}(\hat\xi - \xi)
\;\overset{d}{\to}\;
N\!\left(0,\; C^{-1}\, Var\!\big(\psi(O;\xi,\eta)\big)\, C^{-1}\right).
\]
Moreover, the score $\psi(O;\xi,\eta)$ is Neyman-orthogonal in the nuisance
functions $\eta$.
\end{theorem}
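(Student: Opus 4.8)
The plan is to treat $\xi$ as the solution of a population least-squares moment equation, build an observation-level score by substituting doubly-robust representations of the stratum risks, add a correction term that restores Neyman orthogonality in the nuisance $p^{(0)}$, and finally read off asymptotic linearity and the sandwich variance from standard cross-fitted $Z$-estimation arguments. First I would record the population moment: under Assumption~\ref{assumption:v-s-ci} and linearity, the within-stratum identification equation is $p_S^{(1)}(V) = Z_S(V;p_S^{(0)})^\top\xi$, so the true $\xi$ solves $E[Z_S(V;p_S^{(0)})\{p_S^{(1)}(V)-Z_S(V;p_S^{(0)})^\top\xi\}]=0$, with $\xi$-Jacobian $E[-Z_S Z_S^\top]=C$. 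This already pins down the factor in $\mathrm{EIF}_\xi=-C^{-1}\psi$: for any valid asymptotically linear estimating equation with summand $\psi$ and Jacobian $C$, the induced influence function for $\hat\xi$ is $-C^{-1}\psi$.

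Next I would construct $\psi$ in two stages. Replacing the unobserved $p_S^{(1)}(V)$ by its AIPW representation $U_1(O)=p_S^{(1)}(V)+\frac{\mathbf 1\{A=1\}}{\pi_1(S,V)}(Y-p_S^{(1)}(V))$, which satisfies $E[U_1(O)\mid S,V]=p_S^{(1)}(V)$, gives the plug-in summand $\psi_1=Z_S\{U_1-Z_S^\top\xi\}$; this has the correct mean by iterated expectations but is not orthogonal in $p^{(0)}$, because $p_S^{(0)}$ enters both the regressor $Z_S$ and the fitted value $Z_S^\top\xi=b^\top\beta-p_S^{(0)}\,b^\top(\beta-\lambda)$. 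Computing the Gateaux derivative of $E[\psi_1]$ in a direction $\delta$ perturbing $p^{(0)}$, the contribution multiplying the residual $p_S^{(1)}-Z_S^\top\xi$ vanishes because that residual is zero at the true $\xi$, leaving exactly $E[Z_S\,b^\top(\beta-\lambda)\,\delta]$. I then add the mean-zero augmentation $\psi_2=Z_S\,b^\top(\beta-\lambda)\,\frac{\mathbf 1\{A=0\}}{\pi_0(S,V)}(Y-p_S^{(0)}(V))$, whose own Gateaux derivative in $p^{(0)}$ equals $-E[Z_S\,b^\top(\beta-\lambda)\,\delta]$; since $E[\psi_2\mid S,V]=0$, adding it preserves the moment while cancelling the derivative, so $\psi=\psi_1+\psi_2$ is orthogonal in $p^{(0)}$.

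The remaining orthogonality checks are routine: the residuals $\frac{\mathbf 1\{A=1\}}{\pi_1}(Y-p_S^{(1)})$ and $\frac{\mathbf 1\{A=0\}}{\pi_0}(Y-p_S^{(0)})$ each have conditional mean zero given $(S,V)$, so the pathwise derivatives of $E[\psi]$ with respect to $p^{(1)}$, $\pi_1$, and $\pi_0$ all vanish at the truth by the familiar AIPW cancellations, establishing full Neyman orthogonality in $\eta$. For asymptotic linearity I would expand $P_n\psi(O;\hat\xi,\hat\eta)=0$ about $(\xi,\eta)$: the expected $\xi$-derivative supplies $C$ (the augmentation $\psi_2$ contributing zero, since its residual is conditionally mean-zero), the empirical-process term is controlled by cross-fitting, and the nuisance drift is second order by orthogonality, hence $o_P(n^{-1/2})$ under product-rate conditions $\|\hat p^{(a)}-p^{(a)}\|\,\|\hat\pi_a-\pi_a\|=o_P(n^{-1/2})$. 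This yields $\sqrt n(\hat\xi-\xi)=-C^{-1}\sqrt n\,P_n\psi(O;\xi,\eta)+o_P(1)$ and the stated $N(0,C^{-1}\mathrm{Var}(\psi)C^{-1})$ limit via the CLT.

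I expect the main obstacle to be the correction-term calculation: correctly differentiating $\psi_1$ with respect to $p^{(0)}$, which enters nonlinearly through both $Z_S$ and $Z_S^\top\xi$, and recognizing that the residual factor drops out on the identification equation so that only the $Z_S\,b^\top(\beta-\lambda)$ contribution survives---getting this term and its sign right is exactly what fixes the form of $\psi_2$. A secondary subtlety is justifying that $\psi$ delivers the \emph{efficient} influence function rather than merely a valid orthogonal one; I would handle this by arguing that $U_0,U_1$ are the nonparametric efficient representations of the stratum risks and that the construction coincides with the projection of the full-data score onto the orthocomplement of the nuisance tangent space, so that no efficiency is sacrificed.
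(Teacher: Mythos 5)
Your proposal is correct and follows essentially the same route as the paper's Appendix proof: verify the moment condition at the truth, check Neyman orthogonality by Gateaux differentiation in each nuisance direction ($p^{(1)}$, $p^{(0)}$, $\pi_a$), and invoke standard cross-fitted $Z$-estimation asymptotics with Jacobian $C=E[-Z_SZ_S^\top]$. If anything, your handling of the $p^{(0)}$-perturbation is cleaner than the paper's: you constructively show that the surviving derivative of the fitted value $Z_S^\top\xi$, namely $E[Z_S\,b(V)^\top(\beta-\lambda)\,\delta]$, is exactly cancelled by the derivative of the augmentation term $\psi_2$, whereas the paper's writeup lists two contributions and asserts each vanishes on its own, justifying the second with a conditional-mean-zero argument that really only closes the loop once it is paired with the non-vanishing piece you identify.
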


\begin{proof}
    See Appendix.
\end{proof} 

By results in \citet{chernozhukov2018double}, it follows that if $\hat{\xi} = (\hat{\beta}, \hat{\lambda})$ is estimated by a sample splitting and cross-fitting procedure, nonparametric bootstrap would produce valid confidence intervals for $\beta$, $\lambda$, and derived quantities such as the probabilities comprising the joint distribution $(Y(0), Y(1))$. Alternatively, empirical estimates of the asymptotic variance approximation from Theorem \ref{theorem2} could be used to construct confidence intervals for $\beta$ and $\lambda$, which could be extended to derived quantities via the Delta method.

Theorem \ref{theorem2} assumes a linear probability model. In the Appendix, we extend Theorem \ref{theorem2} to generalized linear models, where the inverse logit link would be of primary interest. In the generalized linear model case, the estimating function $\psi(O;\xi,\eta)$ retains the same structure and is symbolically identical to the expression in Theorem \ref{theorem2}. However, the definition of the design vector $Z_s(v; p_s^{(0)})$ changes so that it is a function of the parameter of interest $\xi$, which necessitates an iterative approach to estimation.

\section{Assessment of Assumption Plausibility}\label{sec:causal_pie}
We will argue that under a minimum sufficient cause or `causal pie' model \citep{rothman1976causes}, it is usually impossible for Assumptions \ref{assumption:s-rel} and \ref{assumption:v-s-ci} to jointly hold exactly. Under a causal pie model, the outcome occurs whenever any minimal sufficient set (or `pie') of causal conditions (slices) is satisfied. The effect of treatment on the outcome is realized by activating or de-activating some subset of causal pie slices, which in turn determine whether a causal pie is satisfied. Individual treatment effects depend in large part on which slices would be satisfied absent treatment. As a rough illustrative example, the outcome `skin cancer' might have one causal pie comprising: {`high UV exposure', `fair skin', `genetic proclivity'}. Sunscreen can prevent skin cancer in some fair skinned people by de-activating the `high UV exposure' slice, but would have no effect on those lacking genetic proclivity. 

Let $Z$ denote the set of events comprising all the slices of any causal pie for $Y$. Let $Z_S\subseteq Z$ denote the set of events such that $Z_S(0) \not\perp\!\!\!\perp S|V_S$ where $Z_S(0)$ denotes the untreated counterfactual values of $Z_S$. One way for Assumption \ref{assumption:v-s-ci} to hold would be if treatment broke the associations between $S$ and $Z$. Another way to put this is that $S$ is d-connected to some elements of $Z$ given $V_S$ in the $A=0$ Single World Intervention Graph (SWIG) \citep{richardson2013single} but not in the $A=1$ SWIG. For example, perhaps $Y$ is `attends a scheduled medical visit', $Z_S$ is `financial burden of transportation to a healthcare facility', $S$ is `income', and $A$ is `distribution of a free transit pass'. When $A=1$, $Z_S(1)$ is deterministically 0 (i.e. `no burden') and thus unassociated with $S$. 

While it is possible for situations such as described above to occur, we submit that it is far more common that $S$ is conditionally associated with $Z(1)$ given $V_S$ if and only if $S$ is conditionally associated with $Z(0)$ given $V_S$, and we assume as much going forward. Now, we will consider two cases. First, suppose that $Z_S(1)\neq Z_S(0)$, i.e. $A$ affects some variable(s) $Z_{SA}\in Z_S$. Because $S$ is conditionally associated with $Z_{SA}(1)$ by assumption, it is also conditionally associated with $Y(1)$ via the path $S\leftarrow\rightarrow Z_{SA}(1)\rightarrow Y(1)$, implying that Assumption \ref{assumption:v-s-ci} is violated. As the second case, suppose that no variable in $Z_S$ is affected by $A$, i.e. $Z_S(1)=Z_S(0)$. If $A$ has any effect on $Y$, then there is at least one variable $Z_A\in Z\setminus Z_S$ such that $Z_A(1)\neq Z_A(0)$. The value of $Z_A(1)$ is not a deterministic function of $V_S$ and $Y(0)$ when $A=0$. Assume that, conditional on $V_S$, the potential outcomes $Z_A(1)$ and $Z_A(0)$ are still associated. Then, because $Z_A(0)$ causes $Y(0)$ and Assumption \ref{assumption:v-s-ci} conditions on $Y(0)$, the collider path $S\leftarrow\rightarrow Z_S\rightarrow Y(0)\leftarrow Z_A(0)\leftarrow\rightarrow Z_A(1) \rightarrow Y(1)$ is open, again violating Assumption \ref{assumption:v-s-ci}.  

The above demonstrates that except for unusual circumstances, Assumptions \ref{assumption:s-rel} and \ref{assumption:v-s-ci} cannot both hold under a causal pie model. However, violations, especially those conveyed through collider paths, can be quite small. They can be smaller still if $V_S$ is a good proxy for $Z(1)$. Moreover, adjusting for a rich set $V_S$ might make potential outcome of $Z_A$ closer to conditionally independent. Given this state of affairs, sensitivity analysis is essential.

\section{Sensitivity Analysis}\label{sec:sensitivity}
 Based on our experience, we believe that most choices of $S$ in most applications would only weakly violate Assumption \ref{assumption:v-s-ci}. Therefore, if the estimator is not very sensitive to small violations, it might enjoy wide applicability. We propose to follow the general approach to sensitivity analysis described in \citep{robins2000sensitivity}. 
 
 Absent covariates, define sensitivity parameters $\{\gamma_{sy}:s\in\mathcal{S},y\in\{0,1\}\}$ where $\gamma_{sy}$ represents the difference $P(Y(1)=1|Y(0)=y,S=s)-P(Y(1)=1|Y(0)=y)$, i.e. the magnitude of violation of Assumption \ref{assumption:s-ci}. With these sensitivity parameters specified and therefore known, we can substitute $P(Y(1)=1|Y(0)=y)+\gamma_{sy}$ in place of $P(Y(1)=1|Y(0)=y)$ in identification formula (\ref{eq:id1}):
 \begin{align}
\begin{split}\label{eq:sens_id}
&\mathbb{P}(Y(1) = 1|S = s) = \\
&(\mathbb{P}(Y(1)=1|Y(0)=0)+\gamma_{s0})\mathbb{P}(Y(0) = 0|S = s) + (\mathbb{P}(Y(1)=1|Y(0)=y)+\gamma_{s1})\mathbb{P}(Y(0) = 1|S = s).
\end{split}
\end{align}
For known $\gamma_{sy}$, (\ref{eq:sens_id}) represents $|\mathcal{S}|$ equations in two unknowns, and $\mathbb{P}(Y(1)=1|Y(0)=y)$ for $y\in\{0,1\}$ is identified as the unique solution to these equations under Assumption \ref{assumption:full_rank}. Estimation can proceed by least squares as described in Section \ref{sec:estimation}. Researchers can then assess how conclusions change as $\gamma_{sy}$ varies over a plausible range of values.

With covariates, one can instead specify sensitivity functions $\{\gamma_{sy}(v):s\in\mathcal{S},y\in\{0,1\}\}$ where $\gamma_{sy}(v) =P(Y(1)=1|Y(0)=y,V_S=v,S=s)-P(Y(1)=1|Y(0)=y,V_S=v)$, i.e. the magnitude of violation of Assumption \ref{assumption:v-s-ci} at each level of $V_S$. Again, for any given specification of the sensitivity functions assumed known, the formula (\ref{eq:covariate:id}) with $P(Y(1)=1|Y(0)=y,V_S=v)+\gamma_{sy}(v)$ in place of $P(Y(1)=1|Y(0)=y,V_S=v)$ would still lead to identification of $P(Y(1)=1|Y(0)=y)$, and estimators could be modified accordingly. And again, researchers could assess how conclusions change as $\gamma_{sy}(v)$ varies over a plausible range.
%\section{Sensitivity to Violations of Assumption \ref{assumption:v-s-ci}}
%\subsection{Assumption \ref{assumption:v-s-ci} is inconsistent with a causal pie model}
%We consider the implications of Assumption \ref{assumption:v-s-ci} in the context of a minimal sufficient cause or `causal pie' framework \citep{rothman1976causes}. In this framework, there exist minimally sufficient sets of conditions (or `causal pies') under which the outcome is guaranteed to occur. Treatment acts by changing the values of certain conditions (or `slices'), and the effect of treatment on an individual depends on which conditions are satisfied for them in the presence and absence of treatment. Let $Z$ denote the set of all variables contained as a slice in any causal pie for the outcome $Y$. Let $Z_A\subseteq Z$ denote the variables on which $A$ acts, i.e. the variables such that $Z(1)\neq Z(0)$ for at least some individuals. Let $Z_B$ denote $Z\setminus Z_A$. Thus 
%\[
%Y(a) = \phi(Z_B,Z_A(a)),
%\] 
%where $\phi$ is an indicator of whether any pie is completed.

%First consider a simple setting in which $Z_A$ and $Z_B$ each comprise a single variable, and each variable is its own one slice pie. If relevance Assumption \ref{assumption:s-rel} holds, then $S$ must be associated with $Z$. Suppose $Y_0=0$. Then we know that $Z_B=Z_A(0)=0$. Therefore, if $S$ is d-separated from $Z_A$ given $V_S$, then $S\independent Y(1)|Y(0)=0,V_S$. However, if $Y(0)=1$, then we do not know the  
\section{Simulation}\label{sec:simulations}
%We provide a simulation to assess the operating characteristics of our orthogonal estimator from Theorem \ref{theorem2}. The simulated setting is a randomized trial in which there is effect modification by a continuous covariate $V_S$, even conditional on $Y(0)$. $S$ is constructed from another covariate $X$ that is associated with $V_S$ such that Assumption \ref{assumption:v-s-ci} holds but not Assumption \ref{assumption:s-ci}. Thus, it is necessary to adjust for heterogeneity from $V_S$. 
\subsection{Data Generating Process}\label{subsection:dgp}

We simulated data under a randomized binary treatment $A \in \{0,1\}$, a binary outcome $Y$, and baseline covariates $(S, V_S, X_{\text{other}})$. The process was designed so that effect heterogeneity exists through $V_S$, and conditional independence of the counterfactuals only holds after conditioning on $V_S$. That is, Assumption \ref{assumption:v-s-ci} holds, but not Assumption \ref{assumption:s-ci}.  

\paragraph{Baseline covariates.} 
The effect-modifying covariate $V_S$ was drawn from a truncated normal distribution:
\[
V_S \sim \text{Normal}(0,1), \quad V_S \in [-2,2].
\]
An auxiliary baseline covariate was generated as
\[
X_{\text{other}} \sim \text{Normal}(.25\times V_S, 1),
\]
so that it is correlated with $V_S$. The stratification variable $S$ was constructed by dividing $X_{\text{other}}$ into quartiles and assigning values $S \in \{1,2,3,4\}$.

\paragraph{Treatment assignment.}
Treatment was randomized with probability one-half, independent of the covariates:
\[
A \sim \text{Bernoulli}(0.5).
\]

\paragraph{Potential outcomes.}
The untreated potential outcome was generated from a logistic regression depending on $S$ and $V_S$:
\[
\text{logit}\,\Pr(Y(0)=1 \mid S, V_S) = -0.5 + 0.3(S-2) + 0.2 V_S.
\]
The treated potential outcome was generated according to a linear probability structural model:
\[
\Pr(Y(1)=1 \mid Y(0)=0, V_S) = \beta_0 + \beta_1 V_S,
\]
\[
\Pr(Y(1)=1 \mid Y(0)=1, V_S) = \lambda_0 + \lambda_1 V_S,
\]
with true parameters $\beta=(0.3,\,0.1)$ and $\lambda=(0.7,\,-0.05)$. These choices yield probabilities in the range $[0.1,0.85]$ across the support of $V_S$. We note that, consistent with the discussion in Section \ref{sec:causal_pie}, the data generating process does not follow a causal pie model.

\paragraph{Observed outcome.}
The observed outcome was
\[
Y = (1-A) Y(0) + A Y(1).
\]
\subsection{Estimation}
In this setting, the treatment probabilities were known to be $0.5$ and so were not estimated. Our model for $Y(0)$ was a logistic regression with $S$, a third degree spline basis of $V_S$, and full interactions as predictors. While this model contains the true DGP for $Y(0)$, the true DGP of $Y(1)$ is quite complex and not exactly contained in our nuisance model. However, we note that our model must have approximated the true DGP well, since bias approached 0 as sample size increased in the simulations. We specified correct linear models for $g(V;\beta)$ and $h(V;\lambda)$. We then solved the estimation equations from Theorem \ref{theorem2} to obtain estimates of parameters $\beta$, $\lambda$, $P(Y(1)=1|Y(0)=0)$, and $P(Y(1)=1|Y(0)=1)$.

\subsection{Results}
We assessed bias and coverage of 95\% bootstrap confidence intervals for: $\beta$, $\lambda$, and the derived quantities $P(Y(1)=1|Y(0)=0)$ and $P(Y(1)=1|Y(0)=1)$. The results are displayed in Table \ref{tab:sim-results}. We see that bias is small and coverage is near nominal. Figure \ref{fig:sim-results} and the Empirical SE column of Table \ref{tab:sim-results} show that there was considerable uncertainty about the parameters of interest, even with n=10,000 observations. An interesting question for future research is to what degree these results are typical. Perhaps significantly more data in addition to stronger assumptions are required to obtain precise estimates of features of the joint distribution of potential outcomes compared to marginal expectations. Increasing the strength of association between $V_S$ and $S$ increased sampling variation even further.  

\begin{table}[htbp]
\centering
\caption{Simulation results for the orthogonal estimator under the data generating process described in Section~X. Results are based on $n=10{,}000$ and 500 Monte Carlo replications.}
\label{tab:sim-results}
\begin{tabular}{lrrrrr}
\toprule
Parameter & True & Mean Estimate & Bias & Empirical SE & Coverage (95\% CI) \\
\midrule
$\beta_0$ & 0.300 & 0.290 & -0.010 & 0.051 & 0.950 \\
$\beta_1$ & 0.100 & 0.100 & -0.000 & 0.042 & 0.948 \\
$\lambda_0$ & 0.700 & 0.714 & 0.014 & 0.074 & 0.956 \\
$\lambda_1$ & -0.050 & -0.053 & -0.003 & 0.061 & 0.940 \\
$\Pr(Y(1)=1 \mid Y(0)=0)$ & 0.300 & 0.291 & -0.009 & 0.050 & 0.952 \\
$\Pr(Y(1)=1 \mid Y(0)=1)$ & 0.700 & 0.712 & 0.012 & 0.069 & 0.956 \\
\bottomrule
\end{tabular}
\end{table}

\begin{figure}
    \centering
    \includegraphics[scale=0.6]{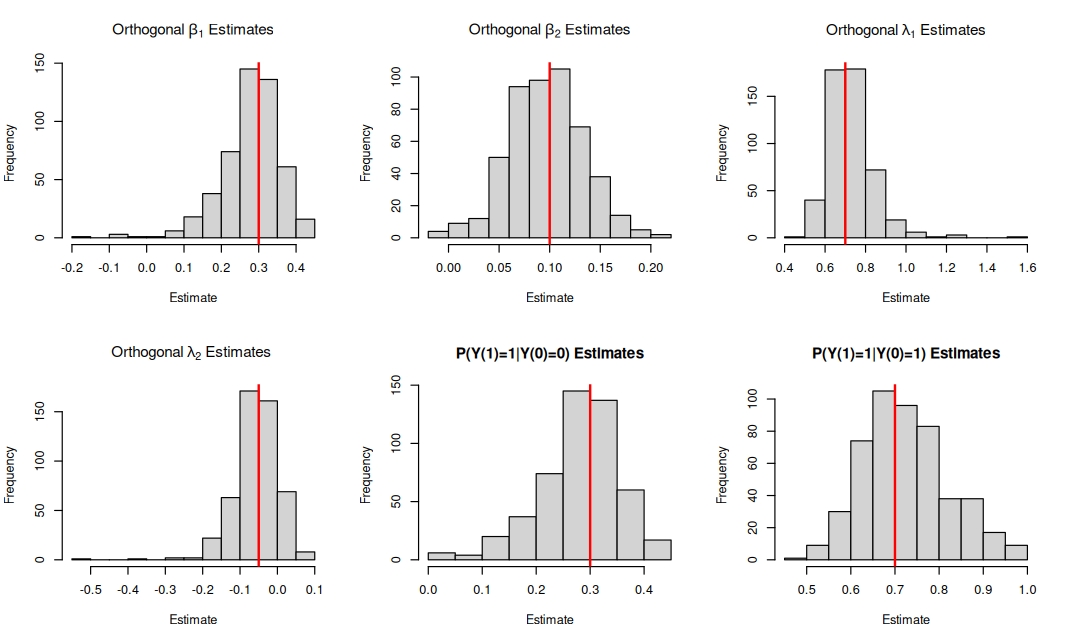}
    \caption{Histograms of parameter estimates across 500 simulations with true values in red}
    \label{fig:sim-results}
\end{figure}

\section{Application to Get Out The Vote Experiment}\label{sec:application}

\subsection*{Data and Setup}
We applied our methods to a large randomized field experiment that mailed voter‐mobilization appeals (``GOTV mailings'') to households, with no mailing as control \citep{gerber2008social}. The unit of analysis is the individual; the treatment $A\in\{0,1\}$ indicates assignment to \emph{any} mailing versus control, and the outcome $Y\in\{0,1\}$ is an indicator of voter turnout in the focal election (a primary in 2006). The dataset contained pre‐treatment covariates: age, sex, prior voting history in several prior elections, number of household members, and voting rates in the participant's geographic region.

Our targets are the conditional probabilities
\[
\Pr\{Y(1)=1\mid Y(0)=0\}\quad\text{and}\quad \Pr\{Y(1)=1\mid Y(0)=1\},
\]
which we interpret, respectively, as an ``activation'' probability among nonvoters under control and a ``no‐harm'' probability among voters under control.

\subsection*{Initial Least–Squares Screen and Motivation}
We first implemented the unadjusted least–squares (LS) estimator from Section \ref{sec:estimation}, trying out multiple different binary choices for $S$: sex (male/female), a binary age indicator (above/below the sample median), and indicators for voting in prior elections (yes/no). Across all of these $S$ choices, the LS estimator yielded $\widehat{\Pr}\{Y(1)=1\mid Y(0)=1\}>1$, i.e., estimates outside $[0,1]$. This pattern is consistent with a failure of Assumption \ref{assumption:s-ci} (conditional independence without further adjustment), motivating use of our orthogonal, covariate‐adjusted procedure from Section \ref{subsection:het_adj}.

\subsection*{Estimation Adjusting for Covariates}
For the orthogonal adjusted analysis, we specified $S$ to be a four‐level factor crossing sex and age (male/old, male/young, female/old, female/young). We chose $|\mathcal S|=4$ to match the four‐dimensional parameter $\xi=(\beta^\top,\lambda^\top)^\top$, defined below. We constructed $V_S$ as a \emph{prognostic score} for $Y(0)$, defined as $\widehat{\Pr}\{Y(0)=1\mid V\setminus\{\text{age, sex}\}\}$. We estimated this prognostic score via logistic regression applied to the control group and using prior voting indicators, the number of household members, and average past voter participation in the geographic area as predictors. We specified $g(v;\beta)=\operatorname{expit}(\beta^\top b(v))$ and $h(v;\lambda)=\operatorname{expit}(\lambda^\top b(v))$ with $b(v)=(1,v)^\top$ (linear index with logistic link). We estimated $\pi$ by the sample proportion assigned to mailers (consistent with randomization). The nuisance functions
  \[
    p_a(s,v)=\Pr(Y=1\mid A=a,S=s,V_S=v),\qquad a\in\{0,1\},
  \]
  were fit via logistic regression including $S$, a spline expansion of $V_S$, and their interaction as predictors. We estimated $\xi=(\beta,\lambda)$ by solving the logistic orthogonal estimating equations of Theorem 
  \ref{theorem3}. Standard errors were estimated using bootstrap clustered at the household level.

\subsection*{Results}
The orthogonal adjusted logistic estimator yielded
\[
\widehat{\Pr}\{Y(1)=1\mid Y(0)=1\} = 1.00,
\qquad
\widehat{\Pr}\{Y(1)=1\mid Y(0)=0\} = 0.054.
\]
Unfortunately, the point estimate $\widehat{\Pr}\{Y(1)=1\mid Y(0)=1\}=1$ near the boundary of the parameter space invalidated our nice normal asymptotic theory and bootstrap confidence intervals. To still obtain valid confidence intervals, an analyst might resort to bootstrap variants such as $BC_a$ \citep{diciccio1996bootstrap}. Alternatively, one might take $\widehat{\Pr}\{Y(1)=1\mid Y(0)=1\}=1$ as known and obtain a corresponding confidence interval for $\widehat{\Pr}\{Y(1)=1\mid Y(0)=0\}$. 

Here, for illustrative purposes, we simply discuss interpretation of the point estimates. The estimate $\widehat{\Pr}\{Y(1)=1\mid Y(0)=1\}=1$, in words, means that no individual who would have voted under control was deterred by receiving a mailing (i.e., no estimated harm among always‐voters). A political operative might conclude that if mailings are cheap, they might as well be sent to everyone as there would be no benefit from targeting. The activation probability point estimate $\widehat{\Pr}\{Y(1)=1\mid Y(0)=0\} = 0.054$ says that among individuals who would not have voted under control, the mailings increased turnout by about 5.4 percentage points.  

As an internal consistency check, note that the law of total probability implies
\[
\Pr\{Y(1)=1\}
=
\Pr\{Y(0)=0\}\cdot \Pr\{Y(1)=1\mid Y(0)=0\}
+
\Pr\{Y(0)=1\}\cdot \Pr\{Y(1)=1\mid Y(0)=1\}.
\]
Using the trial estimate of $\Pr\{Y(1)=1\}$ together with our conditional estimates, the implied value of $\Pr\{Y(0)=1\}$ closely matched the trial’s direct estimate of $\Pr\{Y(0)=1\}$, supporting the plausibility of the conditional independence Assumption \ref{assumption:v-s-ci} after adjusting for the prognostic score $V_S$.

\section{Discussion and Future Work}

In this paper, we have transported the results of \citet{wu2025promises} to a single study setting. In this setting, the investigator has the flexibility to choose a baseline covariate on which to define subgroups such that the key conditional independence assumption is likely to hold. Furthermore, if the conditional independence assumption is more likely to approximately hold conditional on other covariates, we have provided a robust procedure to estimate quantities of interest and their confidence intervals under minimal assumptions on nuisance regressions. This is at the price of specifying parametric models for the dependence of $P(Y(1)=1|Y(0)=a,V_S)$ on adjustment covariate $V_S$. However, we have established that there are theoretical reasons to doubt that the identification assumptions hold exactly (to go along with the usual common sense reasons that are also present in other contexts), making sensitivity analysis particularly important.

There are still many interesting open questions. Perhaps most importantly, the selection of S (including which variable to use and how to discretize it to define `sub-studies') introduces methodological challenges. Adaptive or ensemble-based procedures may help to identify promising $S$ candidates, but raise new issues for valid inference. 

In specifying the dimension and level of flexibility of $g(v;\beta)$ and $h(v;\beta)$, there will be a tradeoff between high variance and model mis-specification error for high dimensional and/or flexible choices and bias from violation of Assumption \ref{assumption:v-s-ci} for low dimensional or inflexible choices. A promising strategy could be to use a spline expansion of a prognostic score $\hat{P}(Y(0)=1|V/S)$ for $V_S$. This was the approach we took in our real data application in Section \ref{sec:application}. The entire issue of model selection requires more investigation and ties in with the selection of the number of levels of $S$.

Fitting the estimators into a Bayesian framework would also be desirable. A Bayesian approach could further facilitate bias analysis and enable quantities such as probability of causation to enter into formal decision analyses.

%Finally, we want to take this opportunity to informally muse a bit on the identification strategy that relies on finding a covariate associated with key variables but limited in its role as an effect modifier. The methods in this paper are not unique in exploiting such a covariate. In structural nested mean models \citep{robins1994correcting,robins2004optimal}, covariates that are not effect modifiers but are associated with treatment are required to identify flexible treatment-response functions (`blip models') under IV assumptions. The recently developed Bespoke IV \citep{richardson2022bespoke,dukes2025using} approach also requires a baseline covariate that is associated with the untreated potential outcome but not an effect modifier. Intuitively, perhaps such covariates enable identification of effects as aspects of a distribution that remain stable across settings where the rest of the distribution shifts. This idea might also be related to the framework for causal inference via stability \citep{peters2016causal}. If we could develop a clearer unifying intuitive understanding of why these strategies work, maybe it would help point the way to other new identification results.

\bibliography{bibliography}

\section*{Appendix}
\subsection*{Proof of Theorem 1}
\begin{proof} $\theta$ solves $m(\theta) = \sum_s g_s(\theta) \equiv \sum_s Z_s(p_s^{(1)} - Z_s^T\theta) = 0$. By Taylor expansion, $g_s(\hat{\theta}) = g_s(\theta) + \frac{\partial g_s(\theta)}{\partial(p_s^{(1)}, p_s^{(0)})} + o_p(n^{-1/2})$. Now,
\begin{equation*}
\frac{\partial g_s(\theta)}{\partial(p_s^{(1)}, p_s^{(0)})} = Z_s[1, \theta_1 - \theta_2].
\end{equation*}
Hence, the unit level contribution to stratum s is
\begin{equation*}
\psi_s(O) = \frac{\partial g_s(\theta)}{\partial(p_s^{(1)}, p_s^{(0)})} \phi_s(O) = Z_s[(\theta_2 - \theta_1)\phi_{0,s} + \phi_{1,s}]
\end{equation*}
where we have substituted in the influence functions $\phi_s$ for DML AIPW estimators of the counterfactual risks. M-estimation theory then gives
\begin{equation*}
\sqrt{n}(\hat{\theta} - \theta) = C^{-1} \frac{1}{\sqrt{n}} \sum_{i=1}^n \psi(O_i) + o_p(1)
\end{equation*}
for $C = \mathbb{E}[\frac{1}{|\mathcal{S}|}\sum_s \frac{\partial\{Z_s(p_s^{(1)} - Z_s^T\theta)\}}{\partial\theta}] = -\frac{1}{|\mathcal{S}|}\sum_s Z_s Z_s^T$, and the CLT completes the proof.
\end{proof}
\subsection*{Proof of Theorem 2}

\subsubsection*{EIF Properties}

We verify the properties that characterize the canonical gradient (efficient influence function):

\textbf{(A) Mean zero and square integrability.}

We first consider the first term in $\psi(O; \xi, \eta)$. At the truth, for each $(s, v)$, $p_{s}^{(1)}(v) = Z_s(v; p_{s}^{(0)})^T \xi$. Therefore, $p_{s}^{(1)} - Z_s(v; p_{s}^{(0)})^T \xi$ cancels to 0. Thus the first term simplifies to
\begin{equation*}
Z_S(V; p_{S}^{(0)}) \frac{\mathbb{I}\{A=1\}}{\pi_1(S,V)} (Y - p_{S}^{(1)}(V)).
\end{equation*}

Furthermore, conditional on $A = 1$, $S$, and $V$, $\mathbb{E}[Y - p_{S}^{(1)}(V)] = 0$. Thus,
\begin{equation*}
\mathbb{E}\left[Z_S(V; p_{S}^{(0)}) \frac{\mathbb{I}\{A=1\}}{\pi_1(S,V)} (Y - p_{S}^{(1)}(V))\right] = 0.
\end{equation*}

Now we turn to the second term. Conditional on $A = 0$, $S$, and $V$, $\mathbb{E}[Y - p_{S}^{(0)}(V)] = 0$. So the expectation of the second term is 0 as well, implying the desired result that $\mathbb{E}[\psi(O; \xi, \eta)] = 0$.

Since $C$ is finite and invertible and $\psi$ is square-integrable, it follows that
\begin{equation*}
\mathbb{E}[\text{EIF}_\xi] = -C^{-1}\mathbb{E}[\psi(O; \xi, \eta)] = 0,
\end{equation*}
and $\text{EIF}_\xi \in L^2(\mathbb{P}_0)$. Thus $\text{EIF}_\xi$ is a valid mean-zero, square-integrable candidate influence function.

\textbf{(B) Pathwise derivative identity.} Fix any smooth parametric submodel $\mathbb{P}_t$ with $\mathbb{P}_{t=0} = \mathbb{P}_0$ and score function $S(O)$ at $t = 0$. Let $\xi(t) := \xi(\mathbb{P}_t)$ and $\eta(t) := \eta(\mathbb{P}_t)$. Consider the moment map
\begin{equation*}
M(t, \xi) := \mathbb{E}_{\mathbb{P}_t}[\psi(O; \xi, \eta(t))].
\end{equation*}

By identification, for the true parameter $\xi(t)$ we have $M(t, \xi(t)) = 0$ for all $t$. Differentiating at $t = 0$,
\begin{align*}
0 &= \frac{d}{dt} M(t, \xi(t))\Big|_{t=0} \\
&= \frac{\partial}{\partial t} \mathbb{E}_{\mathbb{P}_t}[\psi(O; \xi, \eta(t))]\Big|_{t=0} + \mathbb{E}_{\mathbb{P}_0}\left[\frac{\partial}{\partial \xi} \psi(O; \xi, \eta)\right] \xi'(0).
\end{align*}

For the first term, differentiation under the integral yields
\begin{align*}
\frac{\partial}{\partial t} \mathbb{E}_{\mathbb{P}_t}[\psi(O; \xi, \eta(t))]\Big|_{t=0} &= \mathbb{E}_{\mathbb{P}_0}[\psi(O; \xi, \eta_0) S(O)] \\
&\quad + \frac{d}{dt} \mathbb{E}_{\mathbb{P}_0}[\psi(O; \xi, \eta(t))]\Big|_{t=0}.
\end{align*}

The second term vanishes by Neyman orthogonality (shown below). Hence
\begin{equation*}
\frac{\partial}{\partial t} \mathbb{E}_{\mathbb{P}_t}[\psi(O; \xi, \eta(t))]\Big|_{t=0} = \mathbb{E}_{\mathbb{P}_0}[\psi(O; \xi, \eta) S(O)].
\end{equation*}

Therefore
\begin{equation*}
0 = \mathbb{E}_{\mathbb{P}_0}[\psi(O; \xi, \eta) S(O)] + C \xi'(0),
\end{equation*}
where $C = \mathbb{E}_{\mathbb{P}_0}[\partial_\xi \psi(O; \xi, \eta)]$. Rearranging,
\begin{equation*}
\xi'(0) = -C^{-1}\mathbb{E}_{\mathbb{P}_0}[\psi(O; \xi, \eta) S(O)].
\end{equation*}

But $\text{EIF}_\xi(O) = -C^{-1}\psi(O; \xi, \eta)$. Thus
\begin{equation*}
\mathbb{E}_{\mathbb{P}_0}[\text{EIF}_\xi(O) S(O)] = \xi'(0).
\end{equation*}

\textbf{(C) Efficiency.} The orthogonal score construction ensures that $\text{EIF}_\xi$ is the projection of the pathwise derivative onto the tangent space. The previous identity shows it satisfies the covariance representation for every parametric submodel. Hence it is the canonical gradient and thus the efficient influence function.

\subsubsection*{Neyman Orthogonality}

We have already shown that the estimating equation is unbiased at the truth. It remains to show that
\begin{equation*}
\frac{d}{dt} \mathbb{E}[\psi(O; \xi, \eta + th)]|_{t=0} = 0.
\end{equation*}

Let $q_s(v) = p_{s}^{(0)}(v)$ and $r_s(v) = p_{s}^{(1)}(v)$.

\textbf{Perturbation of $r$.} Write the score as
\begin{align*}
\psi(O; \xi, \eta) &= Z_S(V; q) \left\{ r_S(V) + \frac{\mathbb{I}\{A = 1\}}{\pi_1(S, V)} (Y - r_S(V)) - Z_S(V; q)^T \xi \right\} + \cdots,
\end{align*}
where the second line (omitted here) does not involve $r_S$. The Gateaux derivative in direction $h_r$ contributes
\begin{equation*}
\mathbb{E}\left[Z_S(V; q_0) \left\{ h_r(S, V) - \frac{\mathbb{I}\{A = 1\}}{\pi_{0,1}(S, V)} h_r(S, V) \right\}\right].
\end{equation*}

Since $\mathbb{E}[\mathbb{I}\{A = 1\}/\pi_{0,1}(S, V) | S, V] = 1$, the expression simplifies to zero.

\textbf{Perturbation of $q$.} The score also contains $Z_S(V; q)$ and the residual term $(Y - q_S(V))$. Differentiating these terms with respect to $q$ yields
\begin{align*}
&\mathbb{E}\left[\frac{\partial Z_S(V; q)}{\partial q}\Big|_{q=q_0} \cdot h_q(S, V) \left\{ r_{0,S}(V) - Z_S(V; q_0)^T \xi_0 \right\}\right] \\
&\quad + \mathbb{E}\left[Z_S(V; q_0) [b(V)]^T (\beta_0 - \lambda_0) \frac{\mathbb{I}\{A = 0\}}{\pi_{0,0}(S, V)} (- h_q(S, V))\right].
\end{align*}

The first expectation vanishes because at truth $r_{0,S}(V) = Z_S(V; q_0)^T \xi_0$. The second vanishes because the factor $\mathbb{I}\{A = 0\}/\pi_{0,0}(S, V)$ has conditional expectation 1 given $(S, V)$, while $\mathbb{E}[Y - q_0(S, V) | A = 0, S, V] = 0$.

\textbf{Perturbation of $\pi$.} The score depends on $\pi_a$ only through the inverse weights. Differentiating with respect to $\pi_1$ gives
\begin{equation*}
\mathbb{E}\left[- Z_S(V; q_0) \frac{\mathbb{I}\{A = 1\}}{\pi_{0,1}(S, V)^2} (Y - r_0(S, V)) h_{\pi,1}(S, V)\right].
\end{equation*}

Conditioning on $(S, V, A=1)$, this expectation is zero since $\mathbb{E}[Y - r_0(S, V) | A = 1, S, V] = 0$. The analogous calculation for $\pi_0$ yields zero as well.

\textbf{Conclusion.} Since the Gateaux derivative with respect to each component $r$, $q$, $\pi$ vanishes, we conclude that
\begin{equation*}
\frac{d}{dt} \psi(\xi_0, \eta_t)\Big|_{t=0} = 0.
\end{equation*}

Hence the score $\psi(O; \xi, \eta)$ is Neyman orthogonal.

\subsection*{Generalization of Theorem 2 to Nonlinear Parametric Functions}

\subsubsection*{Setup and Assumptions}

Let $g(v; \beta)$ and $h(v; \lambda)$ be arbitrary smooth parametric functions of a known basis $b(v)$, where $\beta, \lambda \in \mathbb{R}^d$ are finite-dimensional parameter vectors. Examples include:
\begin{itemize}
    \item Linear: $g(v; \beta) = \beta^T b(v)$
\item Logistic: $g(v; \beta) = \text{expit}(\beta^T b(v)) = \frac{1}{1 + \exp(-\beta^T b(v))}$
\item Probit: $g(v; \beta) = \Phi(\beta^T b(v))$
\item Log-linear: $g(v; \beta) = \exp(\beta^T b(v))$
\end{itemize}

Beyond the standard regularity conditions for M-estimation \citep{van2000asymptotic}, we assume that $g(\cdot; \beta)$ and $h(\cdot; \lambda)$ are twice continuously differentiable with respect to $\beta$ and $\lambda$.

Let $\xi = (\beta, \lambda)^T \in \mathbb{R}^{2d}$ denote the full parameter vector as in the main text. Define the generalized design vector:
$$Z_s(v; p_s^{(0)}) = \begin{pmatrix}
(1 - p_s^{(0)}(v)) \cdot \nabla_\beta g(v; \beta) \cdot b(v) \\
p_s^{(0)}(v) \cdot \nabla_\lambda h(v; \lambda) \cdot b(v)
\end{pmatrix}$$

where $\nabla_\beta g(v; \beta)$ and $\nabla_\lambda h(v; \lambda)$ are the derivatives of the link functions with respect to the linear predictors $\beta^T b(v)$ and $\lambda^T b(v)$, respectively.

For common link functions:
\begin{itemize}
\item Linear: $\nabla_\beta g(v; \beta) = 1$
\item Logistic: $\nabla_\beta g(v; \beta) = g(v; \beta)(1 - g(v; \beta))$
\item Probit: $\nabla_\beta g(v; \beta) = \phi(\beta^T b(v))$ where $\phi$ is the standard normal density
\end{itemize}
\begin{theorem}\label{theorem3}
    
\textbf{Theorem 2 (Generalized)} Let $\eta = (p^{(0)}, p^{(1)}, \pi_a(s,v) = P(A = a | S = s, V = v))$ denote the nuisance functions. The efficient influence function (EIF) of $\xi = (\beta, \lambda)$ is:

$$\text{EIF}_\xi(O) = -C^{-1} \psi(O; \xi, \eta)$$

where the estimating function is:

$$\psi(O; \xi, \eta) = Z_S(V; p_S^{(0)}) \left[ p_S^{(1)}(V) + \frac{I\{A = 1\}}{\pi_1(S,V)} (Y - p_S^{(1)}(V)) - Z_S(V; p_S^{(0)})^T \xi \right]$$

$$+ Z_S(V; p_S^{(0)}) [b(V)]^T (\beta - \lambda) \frac{I\{A = 0\}}{\pi_0(S,V)} (Y - p_S^{(0)}(V))$$

and the matrix $C$ is given by:

$$C = -E[Z_S(V; p_S^{(0)}) Z_S(V; p_S^{(0)})^T] + E[\Gamma_S(V; p_S^{(0)})]$$

where $\Gamma_S(V; p_S^{(0)})$ contains second-derivative terms arising from the nonlinearity of the link functions:

$$\Gamma_S(V; p_S^{(0)}) = \begin{pmatrix}
(1 - p_S^{(0)}(V)) \cdot \nabla^2_\beta g(V; \beta) \cdot b(V) b(V)^T & 0 \\
0 & p_S^{(0)}(V) \cdot \nabla^2_\lambda h(V; \lambda) \cdot b(V) b(V)^T
\end{pmatrix}$$

The asymptotic variance of $\hat{\xi}$ satisfying $P_n \psi(O; \hat{\xi}, \hat{\eta}) = 0$ is $C^{-1} \text{Var}(\psi(O; \xi, \eta)) C^{-1}$, and the estimating equation is Neyman-orthogonal.
\end{theorem}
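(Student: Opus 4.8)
The plan is to follow the Theorem 2 proof essentially verbatim, with the one genuinely new ingredient being the Jacobian computation. First I would recover the (non-orthogonal) estimating equation as the first-order stationarity condition of the least-squares problem. Writing the model-predicted treated risk as $m_S(V;\xi) = g(V;\beta)(1 - p_S^{(0)}(V)) + h(V;\lambda)\,p_S^{(0)}(V)$ and the residual as $r = p_S^{(1)}(V) - m_S(V;\xi)$, stationarity reads $E[\nabla_\xi m_S(V;\xi)\, r] = 0$. The gradient $\nabla_\xi m_S$ is exactly the generalized design vector: by the chain rule, differentiating $g(v;\beta)$ in $\beta$ produces the factor $\nabla_\beta g(v;\beta)\,b(v)$ (and symmetrically $\nabla_\lambda h(v;\lambda)\,b(v)$ in $\lambda$), which is precisely why the link derivatives enter the definition of $Z_s(v;p_s^{(0)})$. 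Into this naive score I would then splice the same AIPW corrections $\frac{\mathbb{I}\{A=a\}}{\pi_a(S,V)}\,(Y - p_S^{(a)}(V))$ used in Theorem 2, yielding the displayed $\psi(O;\xi,\eta)$.

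Second, I would verify the efficient-influence-function properties by reusing the Theorem 2 argument almost word for word, since the link enters only through $Z_s$ and not through the conditioning structure. Under correct specification $r\equiv 0$ pointwise, so the first bracket collapses to the $A=1$ inverse-probability term, which has conditional mean zero given $(A=1,S,V)$; the second term has conditional mean zero given $(A=0,S,V)$. Mean-zeroness, square-integrability, and the pathwise-derivative identity for $\mathrm{EIF}_\xi = -C^{-1}\psi$ then follow as before from finiteness and invertibility of $C$.

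Third, and this is where the linear proof genuinely changes, I would compute the Jacobian $C = E[\partial_\xi \psi]$. In the linear model $Z_s$ is free of $\xi$, so differentiating the residual returns only $-Z_S Z_S^T$. Under a nonlinear link, $Z_s$ \emph{itself} depends on $\xi$ through $\nabla_\beta g(v;\beta)$ and $\nabla_\lambda h(v;\lambda)$, so the product rule contributes an additional block in which the second derivatives $\nabla^2_\beta g$ and $\nabla^2_\lambda h$ appear; collecting these into the block-diagonal matrix $\Gamma_S(V;p_S^{(0)})$ yields $C = -E[Z_S Z_S^T] + E[\Gamma_S]$. I expect this step to be the main obstacle: it is tempting to invoke $r=0$ and drop the second-derivative contribution (as one legitimately does in the linear case), so the care lies in tracking the $\xi$-dependence of $Z_s$ and retaining $\Gamma_S$ in the robust Jacobian. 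This same $\xi$-dependence is what forces the iterative (IRLS-type) solution of $P_n\psi(O;\hat\xi,\hat\eta)=0$.

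Finally, I would re-run the Neyman-orthogonality check of the Theorem 2 appendix direction by direction. The perturbations of $p^{(1)}$ and of $\pi_a$ are unchanged, since they rely only on the weighting identity $E[\mathbb{I}\{A=a\}/\pi_a \mid S,V]=1$ and on the conditional mean-zero residuals, neither of which involves the link. The perturbation of $p^{(0)}$ again splits into a piece proportional to the residual $r$ (vanishing at the truth) and a piece carrying the $A=0$ residual $(Y - p_S^{(0)})$ (vanishing by conditioning on $(A=0,S,V)$); the only bookkeeping change is that $\partial Z_S/\partial p^{(0)}$ now also carries link-derivative factors, which does not affect the vanishing. With orthogonality established and $C$ in hand, the M-estimation and double-machine-learning asymptotics of \citet{chernozhukov2018double} deliver asymptotic linearity with influence function $-C^{-1}\psi$ and the sandwich variance $C^{-1}\,\mathrm{Var}(\psi)\,C^{-1}$.
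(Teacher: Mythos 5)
Your proposal follows the paper's own proof essentially step for step: the same recovery of the score as the gradient of the (nonlinear) least-squares objective with AIPW corrections spliced in, the same mean-zero and pathwise-derivative verification, the same direction-by-direction Neyman-orthogonality check, and the same Jacobian computation in which the $\xi$-dependence of $Z_S$ through the link derivatives produces the second-derivative block $\Gamma_S$ so that $C = -E[Z_S Z_S^\top] + E[\Gamma_S]$. The only cosmetic difference is that you write the model residual as $p_S^{(1)} - m_S(V;\xi)$ with $m_S = g(1-p_S^{(0)}) + h\,p_S^{(0)}$ rather than the theorem's $p_S^{(1)} - Z_S(V;p_S^{(0)})^\top\xi$, but both are taken to vanish at the truth in the paper's argument, so the remainder of the verification proceeds identically.
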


\begin{proof}

Part I: Efficient Influence Function Properties

We verify the three properties that characterize the canonical gradient:

(A) Mean zero

The first term in $\psi(O; \xi, \eta)$ simplifies at the truth. Since $p_S^{(1)}(v) = g(v; \beta)(1 - p_S^{(0)}(v)) + h(v; \lambda) p_S^{(0)}(v)$, we have:
$$p_S^{(1)}(V) - Z_S(V; p_S^{(0)})^T \xi = 0$$

Thus the first term becomes:
$$Z_S(V; p_S^{(0)}) \frac{I\{A = 1\}}{\pi_1(S,V)} (Y - p_S^{(1)}(V))$$

Taking expectations and using $E[Y - p_S^{(1)}(V) | A = 1, S, V] = 0$:
$$E\left[Z_S(V; p_S^{(0)}) \frac{I\{A = 1\}}{\pi_1(S,V)} (Y - p_S^{(1)}(V))\right] = 0$$

Similarly, the second term has expectation zero since $E[Y - p_S^{(0)}(V) | A = 0, S, V] = 0$.

(B) Pathwise derivative identity

Consider any smooth parametric submodel $P_t$ with $P_{t=0} = P_0$ and score function $S(O)$ at $t = 0$. Let $\xi(t) = \xi(P_t)$ and $\eta(t) = \eta(P_t)$. 

The moment condition $M(t, \xi) = E_{P_t}[\psi(O; \xi, \eta(t))] = 0$ holds for the true parameter path $\xi(t)$.

Differentiating at $t = 0$:
$$0 = \frac{d}{dt} M(t, \xi(t))\Big|_{t=0} = \frac{\partial}{\partial t} E_{P_t}[\psi(O; \xi, \eta(t))]\Big|_{t=0} + E_{P_0}\left[\frac{\partial}{\partial \xi} \psi(O; \xi, \eta)\right] \xi'(0)$$

The first term equals $E_{P_0}[\psi(O; \xi, \eta) S(O)]$ by differentiation under the integral plus the Neyman-orthogonality property (shown below).

With $C = E_{P_0}[\partial_\xi \psi(O; \xi, \eta)]$, we obtain:
$$\xi'(0) = -C^{-1} E_{P_0}[\psi(O; \xi, \eta) S(O)]$$

This establishes the pathwise derivative representation.

(C) Efficiency

The orthogonal score construction ensures that $\text{EIF}_\xi$ achieves the semiparametric efficiency bound by construction.

\textbf{Neyman Orthogonality}

We must show $\frac{d}{dt} E[\psi(O; \xi, \eta + th)]|_{t=0} = 0$ for any direction $h$.

Perturbation of $p^{(1)}$ (nuisance function $r$):

The Gâteaux derivative in direction $h_r$ contributes:
$$E\left[Z_S(V; p_S^{(0)}) \left(h_r(S,V) - \frac{I\{A = 1\}}{\pi_1(S,V)} h_r(S,V)\right)\right]$$

Since $E[I\{A = 1\}/\pi_1(S,V) | S,V] = 1$, this expression equals zero.

Perturbation of $p^{(0)}$ (nuisance function $q$):

This is the most complex part due to the nonlinear link functions. The perturbation affects both $Z_S(V; q)$ and the residual terms.

The derivative of $Z_S(V; q)$ with respect to $q$ involves the second derivatives of the link functions:
$$\frac{\partial Z_S(V; q)}{\partial q} = \begin{pmatrix}
-\nabla_\beta g(V; \beta) \cdot b(V) + (1-q_S(V)) \cdot \nabla^2_\beta g(V; \beta) \cdot b(V) \\
\nabla_\lambda h(V; \lambda) \cdot b(V) + q_S(V) \cdot \nabla^2_\lambda h(V; \lambda) \cdot b(V)
\end{pmatrix}$$

The contribution from perturbing $q$ is:
$$E\left[\frac{\partial Z_S(V; q)}{\partial q}\Big|_{q=q_0} \cdot h_q(S,V) \cdot (p_{0,S}^{(1)}(V) - Z_S(V; q_0)^T \xi_0)\right]$$

At the truth, $p_{0,S}^{(1)}(V) = Z_S(V; q_0)^T \xi_0$, so this term vanishes.

The remaining orthogonality correction term:
$$E\left[Z_S(V; q_0)[b(V)]^T (\beta_0 - \lambda_0) \frac{I\{A = 0\}}{\pi_0(S,V)} (-h_q(S,V))\right]$$

vanishes because $E[Y - q_0(S,V) | A = 0, S, V] = 0$.

Perturbation of propensity scores $\pi_a$:

The score depends on $\pi_a$ only through inverse weights. Differentiating yields terms of the form:
$$E\left[-Z_S(V; q_0) \frac{I\{A = a\}}{\pi_a(S,V)^2} (\text{residual}) \cdot h_{\pi,a}(S,V)\right]$$

These vanish because the residuals have conditional expectation zero given $(A = a, S, V)$.

\textbf{Matrix $C$ Calculation}

The matrix $C = E[\partial_\xi \psi(O; \xi, \eta)]$ contains two components:

1. Main term: $-E[Z_S(V; p_S^{(0)}) Z_S(V; p_S^{(0)})^T]$

2. Second-derivative correction: The nonlinearity of link functions contributes additional terms through:
   $$E\left[\frac{\partial Z_S(V; p_S^{(0)})}{\partial \xi} \cdot (p_S^{(1)}(V) - Z_S(V; p_S^{(0)})^T \xi)\right]$$
   
   At the truth, the second factor vanishes, but the derivative itself contributes $E[\Gamma_S(V; p_S^{(0)})]$ where $\Gamma$ contains the second derivatives of the link functions.

This completes the proof. 
\end{proof}

\textbf{Corollary: Specific Link Functions}
\begin{itemize}
    \item Logistic Links: When $g(v; \beta) = \text{expit}(\beta^T b(v))$ and $h(v; \lambda) = \text{expit}(\lambda^T b(v))$:
- $\nabla_\beta g(v; \beta) = g(v; \beta)(1 - g(v; \beta))$
- $\nabla^2_\beta g(v; \beta) = g(v; \beta)(1 - g(v; \beta))(1 - 2g(v; \beta))$
\item Linear Links: When $g(v; \beta) = \beta^T b(v)$ and $h(v; \lambda) = \lambda^T b(v)$:
- $\nabla_\beta g(v; \beta) = 1$ and $\nabla^2_\beta g(v; \beta) = 0$
- $\Gamma_S(V; p_S^{(0)}) = 0$, recovering the original Theorem 2
\end{itemize}

As mentioned in the main text, since the design vector $Z_S(V; p_S^{(0)})$ depends on the unknown parameters $\xi$, estimation requires an iterative procedure (e.g., Newton-Raphson) that alternates between updating the design vector and solving the estimating equation until convergence.
\end{document}